\newtheorem{proposition}{Proposition}
\newtheorem{corollary}{Corollary}
\theoremstyle{definition}
\newtheorem{example}{Example}
\newtheorem{definition}{Definition}
\newcommand{\C}{\mathbb{C}} 
\newcommand{\real}{\mathbb R} 
\newcommand{\complex}{\mathbb C}
\newcommand{\half}{\tfrac{1}{2}} 
\newcommand{\hi}{\mathcal{H}} 
\newcommand{\ip}[2]{\left\langle\,#1\,|\,#2\,\right\rangle} 
\newcommand{\no}[1]{\left\|#1\right\|} 
\newcommand{\tr}[1]{{\rm tr}\left[#1\right]} 
\newcommand{\id}{\mathbbm{1}} 
\renewcommand{\rho}{\varrho}
\newcommand{\vr}{\mathbf{r}} 
\newcommand{\vsigma}{\boldsymbol{\sigma}} 
\newcommand{\Mo}{\mathsf{M}}
\begin{document}\setlength{\arraycolsep}{2pt}

\title[]{Antidistinguishability of Pure Quantum States}

\author{Teiko Heinosaari}
\author{Oskari Kerppo}

\address{QTF Centre of Excellence, Turku Centre for Quantum Physics, Department of Physics and Astronomy, University of Turku, FI-20014 Turku, Finland}

\begin{abstract}
The Pusey-Barrett-Rudolph theorem has recently provoked a lot of discussion regarding the reality of the quantum state. In this article we focus on a property called antidistinguishability, which is a main component in constructing the proof for the PBR theorem. In particular we study algebraic conditions for a set of pure quantum states to be antidistinguishable, and a novel sufficient condition is presented. We also discuss a more general criterion which can be used to show that the sufficient condition is not necessary. Lastly, we consider how many quantum states needs to be added into a set of pure quantum states in order to make the set antidistinguishable. It is shown that in the case of qubit states the answer is one, while in the general but finite dimensional case the answer is at most $n$, where $n$ is the size of the original set.
\end{abstract}

\maketitle

\section{Introduction}

The reality of the quantum state is an important and debated topic in the foundations of quantum mechanics; either the quantum state is to be taken as a state of knowledge, or it is to be taken as a state of reality, possibly even some hybrid of these two.
The main distinction is thus between realistic and epistemic interpretations of the quantum state; see e.g. \cite{Spekkens05,Spekkens07,PuBaRu12,Leifer14,HaSp10,CoRe17,Ha13,LeMa13}.
 
Suppose that a pure quantum state $\varphi$ is prepared. In the ontological models framework it is understood that, whenever a quantum state  is prepared, an ontic state $\lambda$ from the ontic state space $\Lambda$ emerges as a result. The role of the quantum state is then to encode a probability distribution $\mu_\varphi$ over the ontic state space.
Suppose, then, that a measurement is performed on the quantum state. 
In the ontological models framework every measurement $M$ corresponds to a set of conditional probability distributions $\xi(x \,|\, M , \lambda)$, and the probability of an outcome $x$ is determined by \begin{equation}\label{1}
p(x \,|\, \varphi, M)= \int_\Lambda \xi(x \,|\, M , \lambda) \, d\mu_\varphi(\lambda)
\end{equation} The distinction between the ontic and epistemic ontological models is based on whether the probability measures corresponding to distinct pure states overlap or not \cite{Spekkens07,HaSp10}. 
An ontological model is \emph{$\psi$-ontic} if every pair of distinct pure states  correspond to non-overlapping probability measures. Otherwise the ontological model is \emph{$\psi$-epistemic}.

The overlap between the probability distributions related to distinct quantum states in an ontological model is one of the main features that characterizes the properties of the model. 
We see from \eqref{1} that, whenever $\varphi$ and $\psi$ are distinguishable (i.e. orthogonal) pure states, the intersection of the supports of $\mu_\varphi$ and $\mu_\psi$ has to be of measure zero. 
We can get more information about the supports by looking not only at distinguishable but also \emph{antidistinguishable} sets of pure states -- this relation on states is the topic of the current investigation.
A set of $n$ states is called antidistinguishable if there exists a measurement such that for each state in the set, there is at least one measurement outcome occuring with zero probability; see Fig \ref{fig:1} for an illustration. 
Antidistinguishable sets of quantum states have an important property in the ontological models framework. 
Indeed, if $\{ \rho_i \}$ is an antidistinguishable set, then it can be shown that there does not exist a set $\Omega \subset \Lambda$ such that $\mu_{\rho_i} (\Omega ) > 0$ for every $\rho_i$ \cite{Leifer14}. This means that, whenever a quantum state is sampled from $\{ \rho_i \}$, the emerging ontic state $\lambda$ cannot belong to the support of every $\mu_{\rho_i}$. If this was not the case, any measurement device performing an antidistinguishable measurement on an antidistinguishable set of quantum states would run the risk of giving a measurement outcome that is not compatible with quantum theory. To see this, suppose that the quantum state $\rho_j$ is prepared. In an antidistinguishing measurement, any ontic state $\lambda$ from the support of $\mu_{\rho_j}$ would lead to the measurement result $j$ with zero probability. If it was possible for $\lambda$ to belong to the support of every $\mu_{\rho_i}$, the measurement device would be unable to decide the result of the measurement.
This is exactly the property that is used in the proof of the influential PBR Theorem \cite{PuBaRu12}.

\begin{figure}
\centering
\begin{tabular}{ll}
\textit{a)} & \raisebox{-.35\height}{\includegraphics[scale=.7]{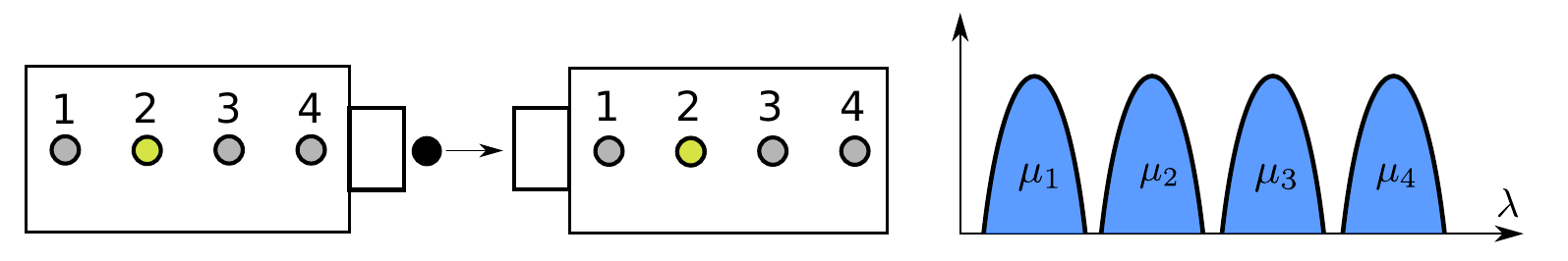}}
\end{tabular}
\begin{tabular}{ll}
\textit{b)} & \raisebox{-.35\height}{\includegraphics[scale=.7]{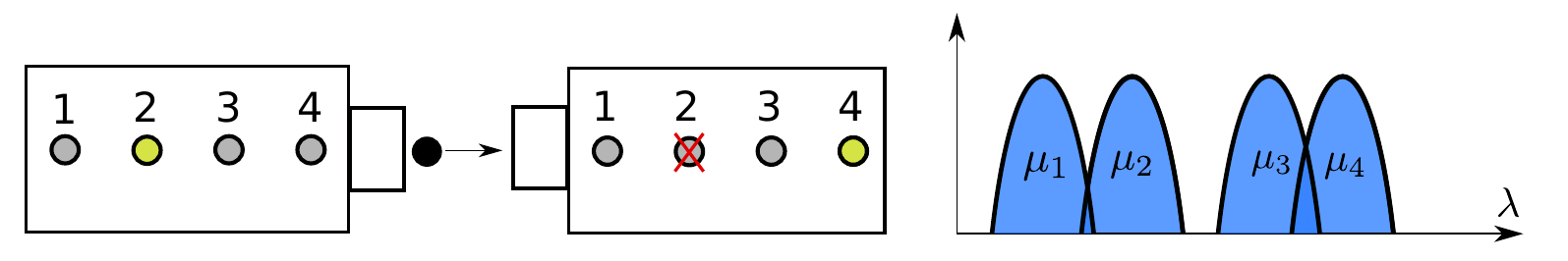}}
\end{tabular}

\captionsetup{width=\linewidth}
\caption{\textit{a)} Preparation and measurement of a state in the distinguishable case. There is no overlap between corresponding probability distributions. \textit{b)} In the antidistinguishable case there exists a measurement outcome with zero probability for every possible preparation of a state. There may be overlap between corresponding probability distributions.}\label{fig:1}
\end{figure}

While PBR used antidistinguishability to prove their main result, antidistinguishability is not a new concept. It has gone by the names of post-Peierls incompatibility \cite{CaFuSc02} and conclusive exclusion \cite{BaJaOpPe14} in past literature. 
Some results regarding antidistinguishable sets are known. 
The antidistinguishability of qubit states was characterized in \cite{CaFuSc02}.
It was also shown, for instance, that three rank-two states in three dimensional Hilbert space are antidistinguishable if and only if the vectors orthogonal to their supports are mutually orthogonal.
Another result, proved in \cite{BaJaOpPe14}, states that a set of $n$ quantum states is antidistinguishable only if 
\begin{align}\label{fidelitycondition}
\sum_{j\neq k} F(\rho_j, \rho_k ) \leq n(n-2) \, , 
\end{align}
where $F(\rho_j, \rho_k )$ is the fidelity between the quantum states $\rho_j$ and $\rho_k$. 
Further, a numerical quantification of antidistinguishability has been used in \cite{Knee17} in order to find an optimal setup to test the reality of the quantum state.

In this work, we present a sufficient condition for a finite set of pure states to be antidistinguishable. 
We show that this condition reduces to the known \cite{CaFuSc02} necessary and sufficient condition in the case of pure qubit states.
We demonstrate that, in higher dimensions, the condition is not necessary, and we discuss a more general criterion, which is necessary and sufficient but not explicit.
Both of our conditions are algebraic in nature, and they do not only give a condition for antidistinguishability but they also give a concrete form for an antidistinguishing measurement.

It is interesting to observe that a non-antidistinguishable set of states may become antidistinguishable when we add states to it. 
We will discuss the minimal number of states that has to be added in order to make any finite set of states antidistinguishable.
In particular, we prove that any finite set of pure qubit states is either antidistinguishable or can be made antidistinguishable by adding a single pure state. 

\section{Definitions and basic properties}

Quantum measurements are generally described by positive operator valued measures (POVMs). 
We recall that a POVM (with finite number of outcomes) is a map $\Mo:j \mapsto \Mo(j)$ that assigns a positive operator to each measurement outcome and is normalized, $\sum_j \Mo(j) = \id$. 

\begin{definition}
States $\varrho_1,\ldots,\varrho_n$ are
\begin{itemize}
\item[(a)] \emph{distinguishable} if there exists a POVM $\Mo$ with the outcomes $1,\ldots,n$ such that $\tr{\varrho_j \Mo(j)}=1$ for every $j=1,\ldots,n$.  
\item[(b)] \emph{antidistinguishable} if there exists a POVM $\Mo$ with the outcomes $1,\ldots,n$ such that $\tr{\varrho_j \Mo(j)}=0$ and $\sum_{k=1}^n \tr{\varrho_k \Mo(j)} \neq 0$ for every $j=1,\ldots,n$.
\end{itemize}
\end{definition}

The latter condition in (b) was not stated in the definition given in \cite{Leifer14}. 
We add it here as an additional requirement to avoid the trivial case where some of the outcomes of the POVM never occur, and it seems to us that this condition is implicitly used in \cite{Leifer14} and \cite{CaFuSc02}.
The additional requirement means that every measurement outcome occurs at least sometimes with the given set of states.
Without this requirement we could have e.g. $\Mo(j)=0$ for some outcomes $j$, and then any set containing two antidistinguishable states would be antidistinguishable.

The following two facts were observed in \cite{Leifer14}.

\begin{proposition}
\begin{itemize}
\item[(a)] If states $\varrho_1,\ldots,\varrho_n$ are distinguishable, then they are also antidistinguishable.  
\item[(b)] Two states $\varrho_1$ and $\varrho_2$ are distinguishable if and only if they are antidistinguishable.
\end{itemize}
\end{proposition}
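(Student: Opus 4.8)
The plan is to prove both parts by relabelling the outcomes of the given POVM, exploiting the fact that $\sum_k \Mo(k)=\id$ rigidly links the probabilities of the different outcomes. For part (a), suppose $\Mo$ distinguishes $\varrho_1,\ldots,\varrho_n$, so $\tr{\varrho_j \Mo(j)}=1$ for each $j$. The first observation is that, since $\sum_k \Mo(k)=\id$ and every term $\tr{\varrho_j\Mo(k)}$ is nonnegative, the condition $\tr{\varrho_j\Mo(j)}=1$ forces $\tr{\varrho_j\Mo(k)}=0$ whenever $k\neq j$. I would then fix a permutation $\sigma$ of $\{1,\ldots,n\}$ with no fixed points---for instance the cyclic shift $\sigma(j)=j+1 \pmod n$, available because $n\geq 2$---and define $\Mo'(j):=\Mo(\sigma(j))$. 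This is again a POVM, since $\sum_j \Mo'(j)=\sum_k \Mo(k)=\id$, and by construction $\tr{\varrho_j\Mo'(j)}=\tr{\varrho_j\Mo(\sigma(j))}=0$ because $\sigma(j)\neq j$.

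It then remains only to check the non-triviality clause $\sum_k \tr{\varrho_k \Mo'(j)}\neq 0$ of antidistinguishability. Writing $m=\sigma(j)$, this total probability equals $\sum_k\tr{\varrho_k\Mo(m)}\geq \tr{\varrho_m\Mo(m)}=1>0$, so the requirement holds and $\Mo'$ antidistinguishes the states. For part (b), the forward implication is exactly the case $n=2$ of part (a). For the converse, suppose $\Mo$ antidistinguishes $\varrho_1,\varrho_2$, so $\tr{\varrho_1\Mo(1)}=\tr{\varrho_2\Mo(2)}=0$. The key point in dimension two is that $\Mo(1)+\Mo(2)=\id$ gives $\tr{\varrho\Mo(1)}+\tr{\varrho\Mo(2)}=1$ for every state $\varrho$, so the vanishing of one term is equivalent to the other term equalling $1$. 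Hence $\tr{\varrho_1\Mo(2)}=\tr{\varrho_2\Mo(1)}=1$, and swapping outcomes, i.e. setting $\Mo'(1)=\Mo(2)$ and $\Mo'(2)=\Mo(1)$, produces a POVM that distinguishes $\varrho_1,\varrho_2$.

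I do not expect a genuine obstacle here: the entire content is that relabelling outcomes converts a ``probability one on the correct label'' condition into a ``probability zero on the correct label'' condition. The two points deserving a line of care are the non-triviality clause in part (a), which is why I verify explicitly that the shifted outcome still occurs with positive total probability, and the standing assumption $n\geq 2$, needed both for a fixed-point-free permutation to exist and because a single state is distinguishable yet not antidistinguishable (the only available POVM element is $\id$, giving $\tr{\varrho_1\id}=1\neq 0$).
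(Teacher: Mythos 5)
Your proposal is correct and follows essentially the same route as the paper, which simply notes that distinguishability implies antidistinguishability by rearranging the outcomes of $\Mo$, and that for two states the exclusion of one implies the other occurred with certainty; you merely spell out the details (the fixed-point-free permutation and the non-triviality clause) that the paper leaves implicit. The only cosmetic slip is the phrase ``in dimension two'' in part (b), where you mean ``with two outcomes''.
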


These statements are easily seen to be true.
Namely, if the states $\varrho_1,\ldots,\varrho_n$ are distinguishable, then they are also antidistinguishable because we can just rearrange the ouctomes of $\Mo$. If two states are antidistinguishable then they are also distinguishable because the exclusion of one state implies that the other occurred with certainty.

It is useful to note that the antidistinguishability condition can be written without the trace.
Namely, $\tr{\varrho_j \Mo(j)}=0$ if and only if $\varrho_j\Mo(j)=0$.
(To see that the first condition leads to the second one, we can write $\tr{\varrho_j \Mo(j)} = \tr{C^*C}$ where $C=\sqrt{\varrho_j}\sqrt{\Mo(j)}$ and $C^*$ is the adjoint (or Hermitian conjugate) of $C$. 
Then $\tr{C^*C}=0$ implies that $C=0$, from which we get $\varrho_j\Mo(j)=0$.)
As we have ruled out the possibility of having $\Mo(j)=0$, we see that \emph{a full-rank (i.e. invertible) state is not antidistinguishable with any other states.}
 
An essential feature that is true for distinguishability but not for antidistinguishability is the following. 
If a set $\mathcal{A} = \{\varrho_1,\ldots,\varrho_n\}$ is distinguishable, then any subset of $\mathcal{A}$ is also distinguishable.
This is, perhaps surprisingly, not the case for an antidistinguishable set; even if a set of states $\mathcal{A}$ would not be antidistinguishable, it may be possible to find an antidistinguishable set $\mathcal{B}$ that contains $\mathcal{A}$.
On the contrary, a feature that is true for antidistinguishability but not for distinguishability is the following.

\begin{proposition}\label{prop:union}
Let $\mathcal{A}= \{\varrho_1,\ldots,\varrho_n\}$ and $\mathcal{B}=\{\eta_1,\ldots,\eta_m\}$ be antidistinguishable sets and $\mathcal{A}\cap\mathcal{B} = \emptyset$. Then $\mathcal{A}\cup\mathcal{B}$ is antidistinguishable. 
\end{proposition}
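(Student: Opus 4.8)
The plan is to build an antidistinguishing POVM for $\mathcal{A}\cup\mathcal{B}$ directly out of the two POVMs that witness the antidistinguishability of $\mathcal{A}$ and of $\mathcal{B}$ separately. By definition there is a POVM $\Mo^A$ with outcomes $1,\ldots,n$ satisfying $\tr{\varrho_j\Mo^A(j)}=0$ and $\sum_{i=1}^n\tr{\varrho_i\Mo^A(j)}\neq 0$ for each $j$, and likewise a POVM $\Mo^B$ with outcomes $1,\ldots,m$ for the states $\eta_1,\ldots,\eta_m$. Since $\sum_j\Mo^A(j)=\sum_k\Mo^B(k)=\id$, the natural move is to split the identity evenly between the two and concatenate the outcome sets.

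Concretely, I would define a POVM $\Mo$ with the $n+m$ outcomes $1,\ldots,n+m$ by setting $\Mo(j)=\half\Mo^A(j)$ for $j=1,\ldots,n$ and $\Mo(n+k)=\half\Mo^B(k)$ for $k=1,\ldots,m$. Each $\Mo(\cdot)$ is positive as a positive multiple of a positive operator, and normalization follows from $\sum_{j=1}^n\Mo(j)+\sum_{k=1}^m\Mo(n+k)=\half\id+\half\id=\id$. The disjointness hypothesis $\mathcal{A}\cap\mathcal{B}=\emptyset$ enters here: it guarantees that $\mathcal{A}\cup\mathcal{B}$ really has $n+m$ distinct states, so after relabelling them as $\sigma_1,\ldots,\sigma_{n+m}$ (with $\sigma_j=\varrho_j$ and $\sigma_{n+k}=\eta_k$) the pairing of the $i$-th outcome with the $i$-th state is well defined.

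It then remains to verify the two defining conditions. The vanishing condition is immediate: for $j\leq n$ we have $\tr{\sigma_j\Mo(j)}=\half\tr{\varrho_j\Mo^A(j)}=0$, and symmetrically $\tr{\sigma_{n+k}\Mo(n+k)}=\half\tr{\eta_k\Mo^B(k)}=0$, so each state of the union is excluded by its own outcome. For the non-triviality condition I would use that every term $\tr{\sigma_i\Mo(l)}$ is nonnegative: for an outcome $l\leq n$ the total $\sum_{i=1}^{n+m}\tr{\sigma_i\Mo(l)}$ dominates the partial sum $\half\sum_{j=1}^n\tr{\varrho_j\Mo^A(l)}$, which is strictly positive because $\Mo^A$ witnesses the antidistinguishability of $\mathcal{A}$; the same argument with $\Mo^B$ handles the outcomes $l>n$. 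Hence no outcome is trivial and $\Mo$ antidistinguishes $\mathcal{A}\cup\mathcal{B}$.

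This argument is essentially bookkeeping, so I do not anticipate a genuine obstacle; the only points needing a little care are checking that the extra non-triviality requirement of the definition survives the convex combination---which it does because the cross terms from the other set are nonnegative---and noticing precisely where disjointness is used, namely to keep the outcome count equal to the number of states so that the definition applies verbatim.
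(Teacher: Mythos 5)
Your construction is exactly the one in the paper: halve each witnessing POVM and concatenate the outcome lists into a single $(n+m)$-outcome POVM. The paper leaves the verification as ``straightforward,'' whereas you spell out positivity, normalization, the vanishing condition, and the non-triviality condition (correctly noting that the nonnegative cross terms preserve it), so your proposal is correct and takes essentially the same approach.
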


\begin{proof}
Let $\Mo_1$ and $\Mo_2$ be antidistinguishable POVMs for $\mathcal{A}$ and $\mathcal{B}$, respectively. 
We define a POVM $\Mo$ with $n+m$ outcomes as
\begin{equation*}
\Mo(j) = \left\{ 
\begin{array}{ll}
\half \Mo_1(j) & \textrm{if } j=1,\ldots,n \\
\half \Mo_2(j) & \textrm{if } j=n+1,\ldots,m
\end{array} \right.
\end{equation*}
It is straightforward to verify that $\Mo$ is antidistinguishing $\mathcal{A}\cup\mathcal{B}$.
\end{proof}

\begin{proposition}\label{prop:2n}
Any finite set of $n$ pure states is a subset of an antidistinguishable set of at most $2n$ elements.
\end{proposition}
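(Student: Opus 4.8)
The plan is to reduce the statement to a combination of antidistinguishable pairs, for which Proposition~\ref{prop:union} is tailor-made. Write each pure state as a rank-one projection $\varrho_i=\ket{\psi_i}\bra{\psi_i}$ with $\no{\psi_i}=1$, and work in $\hh$ with $\dim\hh\geq 2$ (in a one-dimensional space the unique pure state is full-rank and hence, as observed in the excerpt, never antidistinguishable, so the statement is only meaningful for $\dim\hh\geq 2$). Since $\dim\hh\geq 2$, for each $i$ there is a unit vector $\phi_i$ with $\ip{\psi_i}{\phi_i}=0$; put $\eta_i=\ket{\phi_i}\bra{\phi_i}$. Then $\varrho_i$ and $\eta_i$ are orthogonal pure states, hence distinguishable (measure $\varrho_i$ against $\id-\varrho_i$), and therefore the two-element set $\{\varrho_i,\eta_i\}$ is antidistinguishable, because every distinguishable set is antidistinguishable.

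Next I would glue the pairs together. Iterating Proposition~\ref{prop:union} over the $n$ pairs $\{\varrho_i,\eta_i\}$, which are mutually disjoint by construction, shows that their union is antidistinguishable; explicitly, if $\Mo_i$ antidistinguishes the $i$-th pair, then distributing weights $1/n$ produces a POVM $\Mo$ with at most $2n$ outcomes, $\sum_i\tfrac1n(\Mo_i(1)+\Mo_i(2))=\id$, that antidistinguishes the whole collection (and the side condition $\sum_k\tr{\sigma_k\Mo(j)}\neq 0$ is immediate, since each outcome keeps nonzero overlap with the partner it does not exclude). The extended set contains $\{\varrho_1,\dots,\varrho_n\}$ and has at most $2n$ elements, which is exactly the assertion.

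The one point requiring care -- and which I expect to be the main obstacle -- is the disjointness hypothesis of Proposition~\ref{prop:union}: the adjoined states $\eta_i$ must be genuinely new and the pairs mutually disjoint, otherwise both the outcome count and the union argument break. When $\dim\hh\geq 3$ the orthogonal complement of $\psi_i$ has dimension at least two, so each $\eta_i$ may be chosen generically to avoid all previously selected states and disjointness is automatic. The delicate case is $\dim\hh=2$, where the state orthogonal to $\varrho_i$ is forced and could coincide with another state already in play. I would resolve this by a greedy matching: process the states in order and, whenever two of the original states $\varrho_i,\varrho_j$ are already orthogonal, pair them with each other rather than introducing a new state, adjoining a fresh $\eta_i\perp\varrho_i$ only when no orthogonal partner remains among the as-yet-unmatched states. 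A short case analysis shows that in dimension two a forced $\eta_i$ cannot collide with a state already present without forcing a coincidence among the (distinct) originals, so the produced pairs stay disjoint and orthogonal; this gives a perfect matching of the extended set into orthogonal pairs, to which Proposition~\ref{prop:union} applies directly, while the number of adjoined states never exceeds $n$.
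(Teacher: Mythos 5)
Your proof is correct and follows the same strategy as the paper's: pair each original state with a distinguishable companion, note that each pair is antidistinguishable, and glue the pairs together with Proposition~\ref{prop:union}. The only difference is the choice of companion --- the paper uses the mixed state $\tfrac{1}{d-1}(\id-P_i)$, whose rank $d-1$ automatically rules out collisions with pure states when $d\geq 3$, whereas your pure-state companions require the extra disjointness and $d=2$ matching argument you supply, in exchange for an antidistinguishable superset consisting entirely of pure states.
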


\begin{proof}
Let us consider $n$ pure states $P_1,\ldots,P_n$.
For each $i=1,\ldots,n$, we set $\varrho_i = \tfrac{1}{d-1} (\id - P_i)$ and $\mathcal{A}_i = \{P_i,\varrho_i\}$.
The states $P_i$ and $\varrho_i$ have disjoint supports and are therefore distinguishable, hence also antidistinguishable.
By Prop. \ref{prop:union} the finite union $\cup_i  \mathcal{A}_i$ is antidistinguishable. 
\end{proof}

In the following investigation we will focus on pure states. 
We will denote by $P$ a pure state, i.e., a one-dimensional projection. 
For any projection $R$, we denote $R^\perp \equiv \id - R$.

\section{Sufficient condition for the antidistinguishability of pure states}

Pure states $P_1,\ldots,P_n$ are distinguishable if and only if they are pairwise orthogonal, i.e., $P_jP_k=0$ for $j\neq k$.
This is equivalent to the condition that the sum $\sum_{j=1}^n P_j$ is a projection. 
In that case, the sum is the projection with the kernel $\cap_j \ker P_j$.

The following sufficient condition for antidistinguishability can be seen, from the mathematical point of view, as a natural generalization of the previously expressed condition for distinguishability.

\begin{proposition}\label{prop:sum=R}
Let $P_1,\ldots,P_n$ be pure states (and $n\geq 2$). 
\begin{itemize}
\item[(a)] If the sum $\sum_{j=1}^n t_jP_j$ is a projection for some positive real numbers $t_1,\ldots,t_n$, then it is the projection with the kernel $\cap_j \ker P_j$.
\item[(b)] Let us denote by $R$ the projection with the kernel $\cap_j \ker P_j$. If there are positive real numbers $t_1,\ldots,t_n$ such that 
\begin{align}\label{eq:sum=R}
\sum_{j=1}^n t_jP_j = R \, , 
\end{align}
then $P_1,\ldots,P_n$ are antidistinguishable. 
\end{itemize}
\end{proposition}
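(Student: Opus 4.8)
The plan is to prove the two parts in sequence, letting part (a) supply the structural facts used to build the measurement in part (b). For part (a) I would argue by positivity: since the $t_j$ are positive and each $P_j\geq 0$, the operator $S:=\sum_{j=1}^n t_jP_j$ is self-adjoint and positive, and for any vector $\psi$ we have $\pair{\psi}{S\psi}=\sum_j t_j\no{P_j\psi}^2$. This vanishes if and only if $P_j\psi=0$ for all $j$, i.e. iff $\psi\in\cap_j\ker P_j$; and for a positive operator $\pair{\psi}{S\psi}=0$ already forces $S\psi=0$. Hence $\ker S=\cap_j\ker P_j=\ker R$. If moreover $S$ is a projection, it is the orthogonal projection onto $(\ker S)^\perp$, and since an orthogonal projection is determined by its kernel we conclude $S=R$. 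This proves (a) and records the two facts I will reuse: $\ker R=\cap_j\ker P_j$ and, dually, $\ran P_j\subseteq\ran R$, so that $RP_j=P_jR=P_j$ and $P_j\leq R$ for every $j$.

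For part (b) the decisive observation is that $P_j\leq R$ turns $R-P_j$ into an orthogonal projection (it is self-adjoint and, using $RP_j=P_j$, idempotent) that annihilates $P_j$, since $(R-P_j)P_j=P_j-P_j=0$. This is exactly what the antidistinguishability requirement asks for, because $\tr{P_j\Mo(j)}=0$ is equivalent to $P_j\Mo(j)=0$. Taking the trace of \eqref{eq:sum=R} gives $\sum_j t_j=\tr{R}=\rank{R}=:r$, while taking the expectation of \eqref{eq:sum=R} in the state $P_j$ gives $1=t_j+\sum_{k\neq j}t_k\tr{P_kP_j}\geq t_j$, so $t_j\leq 1$. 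Assuming $r\geq 2$ — the degenerate case $r=1$ forces all $P_j$ onto a single ray, where the states coincide and are never antidistinguishable — I would define
\[
N_j:=\frac{t_j}{r-1}\,(R-P_j),\qquad \Mo(j):=N_j+\tfrac1n R^\perp .
\]
Then each $N_j\geq 0$, and summing \eqref{eq:sum=R} yields $\sum_j N_j=\frac{1}{r-1}\big(rR-R\big)=R$, whence $\sum_j\Mo(j)=R+R^\perp=\id$, so $\Mo$ is a genuine POVM.

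It then remains to verify the two conditions in the definition. Since $P_jR^\perp=0$ and $(R-P_j)P_j=0$, we get $P_j\Mo(j)=0$ and hence $\tr{P_j\Mo(j)}=0$. For the non-triviality condition I would use $t_k\leq 1$ to get $\sum_k P_k\geq\sum_k t_kP_k=R$, so that
\[
\sum_{k=1}^n\tr{P_k\Mo(j)}=\tr{\Big(\textstyle\sum_k P_k\Big)\Mo(j)}\geq\tr{R\,\Mo(j)}=\tr{N_j}=t_j>0,
\]
using $\ran N_j\subseteq\ran R$ and $\tr{N_j}=\frac{t_j}{r-1}(r-1)=t_j$. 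The positivity and idempotency checks are routine; the one genuine design choice — and the step I expect to be the crux — is recognizing that the correct building block is $R-P_j$ rather than $R-t_jP_j=\sum_{k\neq j}t_kP_k$, since only the former kills $P_j$, with the weights $t_j$ entering solely through the normalization $\sum_j t_j=\rank{R}$.
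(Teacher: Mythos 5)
Your proof is correct and follows essentially the same route as the paper: the same kernel argument for (a) and the identical POVM $\Mo(j)=\tfrac{t_j}{r-1}(R-P_j)+\tfrac1n R^\perp$ for (b), with the same reduction to $P_j\leq R$ and the same treatment of the degenerate rank-one case. The only (harmless) variation is in verifying the non-triviality condition, where you bound $\sum_k\tr{P_k\Mo(j)}\geq\tr{R\,\Mo(j)}=t_j$ via $\sum_kP_k\geq R$, whereas the paper computes the sum directly as $\tfrac{t_j}{r-1}\sum_k(1-\tr{P_kP_j})$ and uses $\tr{P_kP_j}<1$ for distinct states.
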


\begin{proof}
\begin{itemize}
\item[(a)] Assume that $\sum_{j=1}^n t_jP_j=:Q$ is a projection for $t_1,\ldots,t_n$. If $\varphi\in\hi$ is such that $P_j\varphi=0$ for every $j=1,\ldots,n$, then also $Q\varphi=0$.
Further, if $\psi \in \hi$ is such that $\ip{\psi}{P_j \psi} \neq 0$ for some $j=1,\ldots,n$, then also $\ip{\psi}{Q\psi} \neq 0$.
We conclude that the kernel of $Q$ is exactly the intersection of the kernels of $P_j$, i.e., 
\begin{equation}
\ker Q = \cap_j \ker P_j \, .
\end{equation}
Since $Q$ is a projection, this condition determines it completely.
\item[(b)] Assume that $t_1,\ldots,t_n$ and $R$ satisfy \eqref{eq:sum=R}.
Taking trace on both sides shows that $\sum_j t_j =\tr{R} \equiv r$. 
Since $n\geq 2$ and $P_j$ are different, $R$ cannot be one-dimensional and hence $r\geq 2$. 
We define 
\begin{equation}\label{eq:povm}
\Mo(j)=\tfrac{t_j}{r-1} (R - P_j) + \tfrac{1}{n}R^\perp 
\end{equation}
 for each $j=1,\ldots,n$.
It follows from \eqref{eq:sum=R} that $\sum_j \Mo(j) = \id$.
Further, \eqref{eq:sum=R} implies that $t_jP_j \leq R$, and as $P_j$ and $R$ are projections, this gives $P_j \leq R$.
We conclude that $\Mo(j)\geq 0$ and hence $\Mo$ is a POVM.

From $P_j \leq R$ follows that $P_j R =RP_j = P_j$ and $P_j R^\perp =R^\perp P_j = 0$. 
We thus have $P_j\Mo(j)=0$ and 
\begin{equation}
\sum_k \tr{P_k\Mo(j)}=\tfrac{t_j}{r-1}\sum_k (1- \tr{P_kP_j}) \neq 0
\end{equation}
as $\tr{P_kP_j}<1$ when $P_k \neq P_j$.
Therefore, the pure states $P_1,\ldots,P_n$ are antidistinguishable. 
\end{itemize}
\end{proof}

We remark that Eq. \eqref{eq:povm} in the proof of Prop. \ref{prop:sum=R} provides an explicit formula for the antidistinguishing POVM if the condition \eqref{eq:sum=R} is satisfied.                                                                                                                                                                                                                                                                                                                                                                                                                                                                                                                                                                                                                                                                                                                                                                                                                                                                                                                                                                                                                                                                                                                                                                                                                                                                                                                                                                                                                                                                                                                                                                                                                                                                                                                                                                                                                                                                                                                                                                                                                                                                                                                                                                                                                                                                                                                                                                                                                                                                                                                                                                                                                                                                                                                                                                                                                                                                                                                                                                                                                                                                                                                                                                                                                                                                                                                                                                                                                                                                                                                                                                                                                                                                                                                                                                                                                                                                                                                                                                                                                                                                                                           
For clarity, we also record the following immediate consequence of Prop. \ref{prop:sum=R}. 

\begin{corollary}\label{cor:id}
Let $P_1,\ldots,P_n$ be pure states. 
If there are positive real numbers $t_1,\ldots,t_n$ such that 
\begin{align}\label{eq:sum=id}
\sum_{j=1}^n t_jP_j = \id \, , 
\end{align}
then $P_1,\ldots,P_n$ are antidistinguishable. 
\end{corollary}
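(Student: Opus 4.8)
The plan is to recognize the corollary as the special case $R=\id$ of Proposition \ref{prop:sum=R}(b). The entire content of the reduction is to verify that, under hypothesis \eqref{eq:sum=id}, the identity operator really is the projection $R$ appearing in that proposition, namely the projection with kernel $\cap_j \ker P_j$. Once this identification is made, the hypothesis and conclusion match those of part (b) verbatim.

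This identification is exactly what part (a) provides. Since $\id$ is itself a projection, and the hypothesis \eqref{eq:sum=id} writes it as a positive combination $\sum_{j=1}^n t_j P_j$ of the pure states $P_j$, part (a) forces this sum to be the projection with kernel $\cap_j \ker P_j$. Comparing kernels then yields $\cap_j \ker P_j = \ker \id = \{0\}$, so that the projection $R$ of part (b) is precisely $\id$. With $R=\id$ in hand, \eqref{eq:sum=id} is literally \eqref{eq:sum=R}, and part (b) applies directly to conclude that $P_1,\ldots,P_n$ are antidistinguishable.

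I expect no genuine obstacle here, since the argument is a direct substitution into the already-proven proposition. As a sanity check one can also read off the antidistinguishing measurement explicitly: setting $R=\id$ in \eqref{eq:povm} gives $R^\perp = 0$ and $r = \tr{\id}$ equal to the dimension of the underlying space, so the POVM collapses to $\Mo(j) = \tfrac{t_j}{r-1}(\id - P_j)$, whose positivity, normalization, and vanishing on $P_j$ follow from the same computations carried out in the proof of Proposition \ref{prop:sum=R}(b).
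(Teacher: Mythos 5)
Your proposal is correct and matches the paper's intent: the paper records this corollary as an immediate consequence of Proposition \ref{prop:sum=R}, namely the special case $R=\id$, and your use of part (a) to identify $\id$ with the projection having kernel $\cap_j \ker P_j$ is exactly the right (and only) step needed. The explicit POVM $\Mo(j)=\tfrac{t_j}{d-1}(\id-P_j)$ you read off is also the correct specialization of \eqref{eq:povm}.
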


Let $P_1,\ldots,P_n$, $R$, and $t_1,\ldots,t_n$ be as in Prop. \ref{prop:sum=R}. The largest eigenvalue of the sum $\sum_{j=1}^n t_jP_j$ is between $\max_j t_j$ and $\sum_j t_j$.  
Since the largest eigenvalue of the projection $R$ is 1, it follows that 
\begin{equation}
\max_j t_j \leq 1 \leq \sum_j t_j \, . 
\end{equation}
In particular, we can limit to real numbers  $0<t_j \leq 1$ in the statement of Prop. \ref{prop:sum=R}.
Furthermore, the projection $R$, fixed by the pure states $P_1,\ldots,P_n$, essentially fixes the numbers $t_1,\dots ,t_n$. 
Namely, starting from \eqref{eq:sum=R}, we multiply the both sides by $P_k$ and take the trace. 
This yields a system of linear equations,
\begin{equation}\label{eq:sum=p_jk}
\sum_{k=1}^n p_{jk} t_k = 1 \, , \qquad j=1,\dots, n \, , 
\end{equation}
where $p_{jk} = \tr{P_j P_k}$. 
We can thus calculate the candidate numbers $t_1,\ldots,t_n$, which then still have to be tested whether they satisfy \eqref{eq:sum=R} or not.

\begin{example}
Let 
\begin{equation*}
P_1 = \begin{pmatrix}
1 &0&0\\0&0&0\\0&0&0
\end{pmatrix} \, , \quad P_2 = \tfrac{1}{5} \begin{pmatrix}
1 & 2 & 0 \\ 2 & 4 & 0 \\ 0&0&0
\end{pmatrix} \, , \quad P_3= \tfrac{1}{5} \begin{pmatrix}\hspace{-0.1cm}
\phantom{-}1 & -2&\phantom{-}0 \\\hspace{-0.1cm} -2 & \phantom{-}4 & \phantom{-}0 \\\hspace{-0.1cm} \phantom{-}0&\phantom{-}0&\phantom{-}0
\end{pmatrix}.
\end{equation*}
 It is straightforward to see that the projection with the kernel $\cap_i \ker P_i$ is 
 \begin{equation*}
 R = \begin{pmatrix}
1 &0&0\\0&1&0\\0&0&0
\end{pmatrix} \, .
\end{equation*}
 Solving Eq. \eqref{eq:sum=p_jk} then yields $t_1 = \frac34$, $t_2 = \frac 58$ and $t_3 = \frac58$.
 Using these numbers we get $t_1P_1 + t_2P_2 + t_3P_3=R$, therefore the pure states $P_1$, $P_2$, $P_3$ are antidistinguishable. 
The antidistinguishing POVM given in Eq. \eqref{eq:povm} is 
\begin{equation*}
\Mo (1) = \tfrac{1}{12} \begin{pmatrix}
0&0&0\\
0& 9&0\\
0&0& 4
\end{pmatrix}  , \enskip \Mo (2) = \begin{pmatrix}\hspace{-0.1cm}
\phantom{-}\frac12 & -\frac14 & \phantom{-}0 \\[0.3em]\hspace{-0.1cm} -\frac14 & \phantom{-}\frac 18 & \phantom{-}0\\[0.3em]\hspace{-0.1cm}\phantom{-}0&\phantom{-}0&\phantom{-}\frac13
\end{pmatrix}  , \enskip \Mo (3) = \begin{pmatrix}
\frac12 & \phantom{-}\frac14 & \phantom{-}0 \\[0.3em] \frac14 & \phantom{-}\frac 18 & \phantom{-}0\\[0.3em]0&\phantom{-}0&\phantom{-}\frac13
\end{pmatrix}.
 \end{equation*}

\end{example}

\section{Generating antidistinguishable sets of pure states}

In this section we explain and demonstrate how Prop. \ref{prop:sum=R} and Cor. \ref{cor:id} give an easy group theoretical method to construct antidistinguishable pure states.

Let $G$ be a finite group and $U:g \mapsto U_g$ its irreducible unitary representation on $\complex^d$. 
For a pure state $P$ and $g\in G$, we denote $P_g = U_g P U_g^*$. 
Hence, the group $G$ acts on the set of pure states. 
In the following, we take the pure state $P$ to be fixed and consider the orbit of $P$ under $G$, i.e., the set $\{P_g: g\in G\}$.
We assume that $P$ is chosen such that the orbit contains more than one element, i.e., that $P$ is not a fixed point. 
The operator $\sum_h P_h$ commutes with every $U_g$, and it hence follows from Schur's lemma that  $\sum_h P_h = c \id$ for some constant $c\in \complex$. 
Taking trace on both sides of this equation shows that $c=\# G/d > 0$, where $\# G$ is the order of $G$. 

Firstly, if $P_g \neq P_{h}$ whenever $g \neq h$, we conclude that the pure states  $P_g, g \in G$, are antidistinguishable as the condition \eqref{eq:sum=id} is valid for $t_j = d/\# G$. 
The POVM given in \eqref{eq:povm} is in this case
\begin{equation}
\Mo(g)=\tfrac{d}{\# G (d-1)} U_g P^\perp U_g^* \, .
\end{equation}
This is a special type of $U$-covariant POVM. 

Secondly, it is possible that $P_g = P_h$ for some $g \neq h$.
Let us denote by $\mathcal{I}(Q)$ the subgroup of $G$ that keeps a pure state $Q$ invariant, i.e., $\mathcal{I}(Q)=\{ h \in G: U_h Q U_h^* = Q \}$.
We have $\mathcal{I}(P_g) = \mathcal{I}(P)$ for every $g\in G$. 
The elements in the orbit of $P$ under $G$ can be labeled by the elements of the quotient group $G/\mathcal{I}(P)$, and for every $\gamma\in G/\mathcal{I}(P)$ we denote $P_\gamma = P_g$, where $g\in G$ is such that $g\cdot \mathcal{I}(P) = \gamma$.
Thereby, we have
\begin{equation}
\# \mathcal{I}(P) \cdot \sum_{\gamma \in G/\mathcal{I}(P)} P_q = \sum_{g \in G} P_g = \tfrac{\# G}{d} \id \, , 
\end{equation}
implying that the pure states  $P_\gamma, \gamma \in G/\mathcal{I}(P)$, are antidistinguishable as the condition \eqref{eq:sum=id} is valid for $t_j = \# \mathcal{I}(P)d/\# G$.

\begin{example}
The quaternionic group $Q$ consists of $8$ elements $\pm 1$, $\pm i$, $\pm j$, $\pm k$ satisfying the relations
\begin{equation*}
\begin{split}
(-1)^2 & = 1 \qquad (\pm 1) g = g(\pm 1)= \pm g \quad \forall g\in Q \\
i^2 & = j^2 = k^2 = -1 \qquad ij = -ji = k \, .
\end{split}
\end{equation*}
Let $\sigma_x$, $\sigma_y$, $\sigma_z$ be the usual Pauli matrices. 
The following map $U$ is an irreducible representation of $Q$,
\begin{align*}
U(\pm 1) = \pm \id \, , \quad U(\pm i) = \pm i \sigma_x \, , \quad U(\pm j) = \mp i \sigma_y \, , \quad  U(\pm k) = \pm i \sigma_z \, .
\end{align*}
We choose $P= \half (\id + \tfrac{1}{\sqrt{3}} (\sigma_x + \sigma_y + \sigma_z))$.
We then have $\mathcal{I}(P) = \{ \pm 1\}$, and hence the quotient group $Q/\mathcal{I}(P)$ contains four elements, $\bar{1}\equiv \{ \pm 1\}$, $\bar{i}\equiv\{ \pm i\}$, $\bar{j}\equiv\{ \pm j\}$, and $\bar{k}\equiv \{ \pm k\}$.
The corresponding pure states are 
\begin{align*}
& P_{\bar{1}} = \half (\id + \tfrac{1}{\sqrt{3}} (\sigma_x + \sigma_y + \sigma_z)) \, , & P_{\bar{i}} = \half (\id + \tfrac{1}{\sqrt{3}} (\sigma_x - \sigma_y - \sigma_z)) \\
& P_{\bar{j}} = \half (\id + \tfrac{1}{\sqrt{3}} (-\sigma_x + \sigma_y - \sigma_z)) \, , & P_{\bar{k}} = \half (\id + \tfrac{1}{\sqrt{3}} (-\sigma_x - \sigma_y + \sigma_z)) \, .
\end{align*}
Since these four states are generated following the previous method, they are antidistinguishable.  
It is also easily checked that, indeed, $P_{\bar{1}}+P_{\bar{i}}+P_{\bar{j}}+P_{\bar{k}}=2\id$.\end{example}

Finally, we remark that it may be more convenient to start from a \emph{reducible} unitary representation of a group $G$ on $\complex^d$, and then take an invariant subspace $\mathcal{V}\subset \complex^d$ such that the restriction $U \upharpoonright \mathcal{V}$ is irreducible. 
If we then choose a pure state $P$ such that the eigenspace of $P$ is a subspace of $\mathcal{V}$, our earlier method, applied now to the representation $U \upharpoonright \mathcal{V}$, still works. 
The only difference is that Schur's lemma leads to the identity operator $\id_{\mathcal{V}}$ on $\mathcal{V}$, but this can be seen as a projection on the original vector space $\C^d$. Therefore Prop. \ref{prop:sum=R} applies.
This slight generalization of the earlier method is demonstrated in the following example. 

\begin{example}
Let $S_n$ be the symmetric group of order $n$, $n\geq 3$.
Fix an orthonormal basis $\{\phi_1,\ldots,\phi_n\}$ for $\C^n$.
In the permutation representation of $S_n$ each element $g\in S_n$ permutes the coordinates in the fixed basis. 
The permutation representation is reducible, splitting into invariant subspaces of dimensions $1$ and $n-1$. 
The first one is generated by the vector $\sum_{j=1}^n \phi_j$, while the latter is given as $\{ \sum_{j=1}^n c_j \phi_j  : \sum_{j=1}^n c_j = 0 \} =: \mathcal{V}$. 
The restriction of the permutation representation to this $(n-1)$ -dimensional subspace is called the standard representation of $S_n$.
For instance, the following vectors belong to $\mathcal{V}$ and are on the same orbit in the standard representation of $S_3$,
$$
\psi_1 = \frac{1}{\sqrt{2}} \left(\!\!\! \begin{array}{c}\phantom{-}1 \\ -1 \\ \phantom{-}0
\end{array}\right) \, , \quad 
\psi_2 = \frac{1}{\sqrt{2}} 
\begin{pmatrix}
1\\ 0 \\ \!\!-1
\end{pmatrix} \, , \quad 
\psi_3 = \frac{1}{\sqrt{2}} 
\begin{pmatrix}
0\\ 1 \\ \!\!-1
\end{pmatrix}  \, .
$$
There are also three other vectors in this orbit, but these are just scalar multiples of the previous vectors and therefore determine the same pure states. 
We conclude that the following three pure states are antidistinguishable:
$$
P_1 = \frac{1}{2} \begin{pmatrix}
\hspace{-0.1cm}\phantom{-}1 & -1&\phantom{-}0\\\hspace{-0.1cm}-1&\phantom{-}1&\phantom{-}0\\\hspace{-0.1cm}\phantom{-}0&\phantom{-}0&\phantom{-}0
\end{pmatrix}  , \enskip P_2 = \frac{1}{2} \begin{pmatrix}
\hspace{-0.1cm}\phantom{-}1 & \phantom{-}0 & -1 \\\hspace{-0.1cm} \phantom{-}0 & \phantom{-}0 & \phantom{-}0 \\\hspace{-0.1cm} -1&\phantom{-}0&\phantom{-}1
\end{pmatrix}  , \enskip P_3= \frac{1}{2} \begin{pmatrix}
0&\phantom{-}0&\phantom{-}0\\0& \phantom{-}1 & -1 \\0& -1 & \phantom{-}1
\end{pmatrix} \, .
$$
From the previous method we know that the sum $P_1+P_2+P_3$, divided by $3/2$, is rank 2 projection; this is also easy to check directly. 
\end{example}

\section{General criterion}

Let $\{P_j\}_{j=1}^n$ be an antidistinguishable set of $n$ pure quantum states in a $d$-dimensional Hilbert space. 
There hence exists a POVM $\Mo$ such that $P_j \Mo(j)=0$ for each $j$.
It follows that the spectral decomposition of $\Mo(j)$ is 
\begin{equation}
\Mo(j) = \sum_{k=2}^{d} \alpha_j^k P_j^k \, , 
\end{equation}
where $\alpha_j^k\in [0,1]$ and $P_j^k$ are (pairwise orthogonal) one-dimensional projections satisfying
\begin{equation}\label{eq:id1}
P_j + \sum_{k=2}^{d} P_j^k = \id \, .
\end{equation}
Since $\Mo$ is a POVM, we also have
\begin{equation}\label{eq:id2}
\sum_{j=1}^n \sum_{k=2}^{d} \alpha_j^k P_j^k  = \id \, .
\end{equation}
Further, the condition $\sum_{i=1}^n \tr{P_i \Mo(j)}\neq 0$ means that
\begin{equation}\label{eq:id3}
\sum_{i=1}^n \sum_{k=2}^{d} \alpha_j^k \ \tr{P_i P_j^k} \neq 0 \, .
\end{equation}
In conclusion, from the antidistinguishability of $\{P_j\}_{j=1}^n$ follows that there exist one-dimensional projections $P_j^k$ such that \eqref{eq:id1}, \eqref{eq:id2} and \eqref{eq:id3} hold.
It is convenient to depict these projections in a chart; see Table \ref{table 1}.

\begin{table}
\centering
\begin{tabular}{lccr}
$P_1$&$P_2$&$\cdots$&$P_n$\\[3pt] \hline \\[-5pt]
$P^2_1$&$P^2_2$&$\cdots$&$P^2_n$\\
$\vdots$&$\vdots$&$\ddots$&$\vdots$\\
$P^{d}_1$&$P^{d}_2$&$\cdots$&$P^{d}_n$
\end{tabular}
\caption{Illustration of the projections used in Prop. \ref{generalprop}. 
Each column in the chart sums to the identity operator $\id$.}
\label{table 1}
\end{table}

We now record the following construction as a necessary and sufficient condition for the set of pure states to be antidistinguishable. 

\begin{proposition}\label{generalprop}
A set $\{P_j\}_{j=1}^n$ of $n$ pure quantum states in a $d$-dimensional Hilbert space is antidistinguishable if and only if there exist $(d-1)\cdot n$ pure states $P_j^k$ and real numbers $0\leq \alpha_j^k \leq 1$, $i\in\{1,\ldots,n\}$, $j\in\{2,\ldots,d\}$, such that \eqref{eq:id1}, \eqref{eq:id2} and \eqref{eq:id3} hold.
\end{proposition}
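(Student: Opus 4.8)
The plan is to treat the two implications separately, observing that the ``only if'' direction is essentially already carried out in the discussion preceding the statement. Starting from an antidistinguishing POVM $\Mo$, the condition $P_j\Mo(j)=0$ forces the range of $P_j$ into the kernel of $\Mo(j)$, so one may complete $P_j$ to an orthonormal eigenbasis diagonalizing the self-adjoint operator $\Mo(j)$; the remaining $d-1$ eigenprojections supply the one-dimensional projections $P_j^k$ together with eigenvalues $\alpha_j^k\in[0,1]$. The three displayed equations then record, respectively, the completeness of this eigenbasis (with $P_j$ as the zero-eigenvector), the POVM normalization $\sum_j\Mo(j)=\id$, and the non-triviality requirement $\sum_i\tr{P_i\Mo(j)}\neq 0$. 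For this direction I would simply summarize that construction.

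For the substantial ``if'' direction, suppose we are given pure states $P_j^k$ and numbers $0\le\alpha_j^k\le 1$ satisfying \eqref{eq:id1}, \eqref{eq:id2} and \eqref{eq:id3}. The natural candidate is to set
\begin{equation*}
\Mo(j)=\sum_{k=2}^{d}\alpha_j^k P_j^k \, .
\end{equation*}
Positivity $\Mo(j)\geq 0$ is immediate since each $\alpha_j^k\geq 0$ and each $P_j^k$ is a projection, normalization $\sum_j\Mo(j)=\id$ is exactly \eqref{eq:id2}, so $\Mo$ is a POVM, and the non-triviality condition is precisely \eqref{eq:id3}. It therefore remains only to verify the defining antidistinguishing condition $P_j\Mo(j)=0$.

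The key step, and essentially the only point requiring care, is to extract orthogonality from \eqref{eq:id1}. I would argue that whenever $d$ one-dimensional projections sum to the identity on a $d$-dimensional Hilbert space they are automatically pairwise orthogonal: multiplying \eqref{eq:id1} by $P_j$ and taking the trace gives $1=\tr{P_j}=\tr{P_j^2}+\sum_{k}\tr{P_j P_j^k}=1+\sum_{k}\tr{P_j P_j^k}$, hence $\sum_{k}\tr{P_j P_j^k}=0$; since each summand is nonnegative (a trace of a product of two positive operators) every term vanishes, and $\tr{P_j P_j^k}=0$ for rank-one projections forces $P_j P_j^k=0$. Consequently $P_j\Mo(j)=\sum_{k}\alpha_j^k P_j P_j^k=0$, which completes the verification. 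I expect this orthogonality observation to be the only real obstacle, since \eqref{eq:id1} is phrased merely as a sum of projections equalling the identity, and one must notice that this rigidly forces mutual orthogonality before $P_j\Mo(j)=0$ can be read off; everything else is a direct translation of the three equations into the three requirements defining an antidistinguishing POVM.
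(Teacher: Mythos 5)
Your proof is correct and follows essentially the same route as the paper: the ``only if'' direction is the spectral decomposition already laid out before the statement, and the ``if'' direction defines $\Mo(j)=\sum_{k}\alpha_j^k P_j^k$ and reads off normalization from \eqref{eq:id2} and non-triviality from \eqref{eq:id3}. Your explicit trace argument extracting $P_jP_j^k=0$ from \eqref{eq:id1} is exactly the step the paper merely asserts (and in fact mis-attributes to \eqref{eq:id2} in its proof), so you have if anything supplied a detail the published argument leaves implicit.
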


\begin{proof}
We have already seen that the `only if' holds. 
To prove the `if' part, we reverse the argument. 
Let us assume that pure states $P_j^k$ and real numbers $\alpha_j^k$ with the required properties \eqref{eq:id1}, \eqref{eq:id2} and \eqref{eq:id3} exist.
From \eqref{eq:id1} follows that we can define a POVM $\Mo$ as $\Mo(j) = \sum_{k=2}^{d} \alpha_j^k P_j^k$.
Then, \eqref{eq:id2} implies that $P_j$ is orthogonal to all $P_j^k$, therefore $P_j\Mo(j)=0$.
Finally, \eqref{eq:id3} implies that $\sum_{i=1}^n \tr{P_i \Mo(j)} \neq 0$. 
\end{proof}

We emphasize that while considering a set of $n$ pure quantum states one is not guaranteed to find an antidistinguishing POVM by completing the chart as shown in Table \ref{table 1}, even if the set under consideration is antidistinguishable. 
Prop. \ref{generalprop} merely guarantees that whenever the set is antidistinguishable, it is then possible to find the antidistinguishing POVM in the explained way. 
In the following example we demonstrate how the method works. 
This example also shows that the condition of Prop. \ref{prop:sum=R} is not necessary for a set of pure states to be antidistinguishable.

\begin{example}
Let us consider three pure states in dimension 3, given as
$$
P_1 = \begin{pmatrix}
1 &0&0\\0&0&0\\0&0&0
\end{pmatrix} \, , \quad P_2 = \tfrac{1}{5} \begin{pmatrix}
1 & 2 & 0 \\ 2 & 4 & 0 \\ 0&0&0
\end{pmatrix} \, , \quad P_3= \frac{1}{5} \begin{pmatrix}
0&0&0\\0& 1 & 2 \\ 0& 2 & 4
\end{pmatrix} \, .
$$
There are no positive real numbers $t_1$, $t_2$ and $t_3$ such that $\sum_{i=1}^3t_i P_i$ would sum up to a projection. 
One way to see this is to solve the $t$'s from Eq. \eqref{eq:sum=p_jk} and then find that the operator $R=\sum_{i=1}^3t_i P_i$ is not a projection, i.e., $R \neq R^2$. Another way is to solve the system of equations obtained from $\sum_{i=1}^3t_i P_i = \left(\sum_{i=1}^3t_i P_i\right)^2$. There it is evident that a solution where all of the $t$'s are nonzero and positive does not exist.
We thus conclude that Prop. \ref{prop:sum=R} is not applicable. 
However, a possible chart as in Prop. \ref{generalprop} is the following. 
\[\begin{array}{lcr}
\begin{pmatrix}
1 &0&0\\0&0&0\\0&0&0
\end{pmatrix}  &   \frac 15\begin{pmatrix}
1 & 2 & 0 \\ 2 & 4 & 0 \\ 0&0&0
\end{pmatrix} &  \frac15\begin{pmatrix}
0&0&0\\0&1 & 2 \\ 0&2&4
\end{pmatrix}\hspace{0.6cm}\\[0.7cm] \hline \\
\begin{pmatrix}
0&0&0\\0&0 & 0 \\ 0&0&1
\end{pmatrix} & \hspace{0.3cm}\begin{pmatrix}
0&0&0\\0&0 & 0 \\ 0&0&1
\end{pmatrix} &  
 \frac 15\begin{pmatrix}
0&\phantom{-}0&\phantom{-}0\\0&\phantom{-}4 & -2 \\0&-2&\phantom{-}1
\end{pmatrix}\hspace{0.2cm} \\ \\
\begin{pmatrix}
0&0&0\\0&1 & 0 \\ 0&0&0
\end{pmatrix} & \hspace{0.3cm}  \frac 15\begin{pmatrix}\hspace{-0.1cm}
\phantom{-}4&-2&\phantom{-}0\\\hspace{-0.1cm}-2&\phantom{-}1 & \phantom{-}0 \\ \hspace{-0.1cm}\phantom{-}0&\phantom{-}0&\phantom{-}0
\end{pmatrix} &   \begin{pmatrix}
1&0&0\\0&0 & 0 \\ 0&0&0
\end{pmatrix}\hspace{0.5cm}
\end{array}\]
We now find that $P_1^3+P_2^2+P_3^3 = \id$, so choosing 
\begin{equation}
[\alpha_j^k] = \begin{pmatrix}
0 & 1 & 0 \\
1 & 0 & 1
\end{pmatrix}
\end{equation}
we confirm that $P_1$, $P_2$ and $P_3$ are antidistinguishable.
In conclusion, the condition of Prop. \ref{prop:sum=R} is not necessary for pure states to be antidistinguishable. 
\end{example}

\section{Antidistinguishability of qubit states}

We have earlier observed that a full-rank state is not antidistinguishable with any other states.
In the case of qubit states, this means that only pure qubit states can be antidistinguishable. 

\begin{figure}
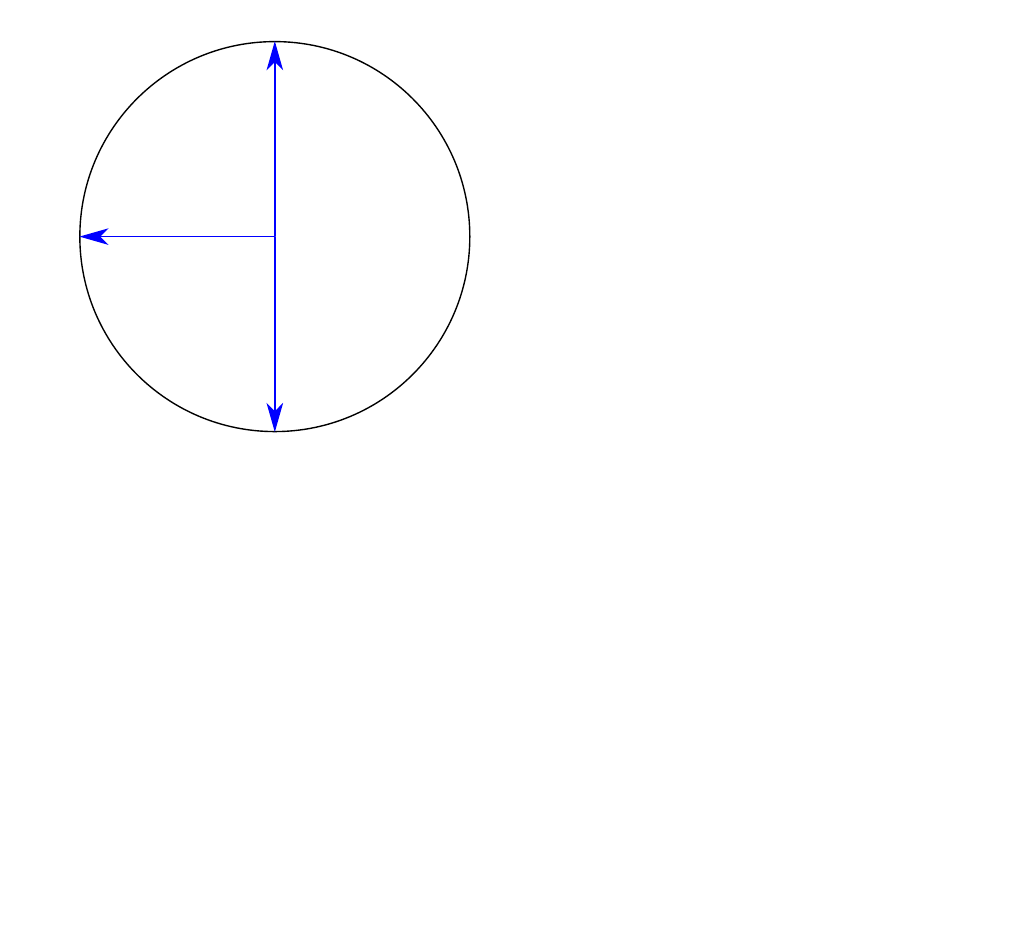
\captionsetup{width=\linewidth}
\caption{Examples of Bloch vectors on a plane. The sets of blue vectors are not antidistinguishable by themselves, but are made antidistinguishable by adding the red dashed vector into the set.}\label{fig:2}
\end{figure}

\begin{proposition}\label{prop:qubit}
Let $P_1,\ldots,P_n$ be pure qubit states. 
They are antidistinguishable if and only if there are positive real numbers $t_1,\ldots,t_n$ such that 
\begin{align}\label{eq:sum=id2}
\sum_{j=1}^n t_jP_j = \id \, . 
\end{align}
\end{proposition}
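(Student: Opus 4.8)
The plan is to prove the two implications separately. The ``if'' direction requires no new work: if positive reals $t_1,\ldots,t_n$ satisfy $\sum_{j=1}^n t_j P_j = \id$, then Corollary \ref{cor:id} immediately gives that $P_1,\ldots,P_n$ are antidistinguishable.

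For the ``only if'' direction I would exploit the fact that the Hilbert space is two-dimensional, so that each rank-one projection $P_j$ has a one-dimensional kernel. Assume $P_1,\ldots,P_n$ are antidistinguishable and let $\Mo$ be a witnessing POVM, so that $\tr{P_j \Mo(j)} = 0$ for every $j$. Using the equivalence recorded earlier in the text, this means $P_j \Mo(j) = 0$, so the range of $\Mo(j)$ is contained in $\ker P_j$. In $\complex^2$ this kernel is precisely the line onto which $P_j^\perp = \id - P_j$ projects; since $\Mo(j)$ is positive, its support coincides with its range, and therefore $\Mo(j) = \alpha_j P_j^\perp$ for some scalar $\alpha_j \geq 0$.

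The next step is to pin down the scalars. The second clause in the definition of antidistinguishability, $\sum_k \tr{P_k \Mo(j)} \neq 0$, forces $\alpha_j \neq 0$ for every $j$, since $\alpha_j = 0$ would give $\Mo(j) = 0$ and make that sum vanish; hence all $\alpha_j > 0$. Substituting $\Mo(j) = \alpha_j(\id - P_j)$ into the normalization $\sum_j \Mo(j) = \id$ yields $\bigl(\sum_j \alpha_j\bigr)\id - \sum_j \alpha_j P_j = \id$, that is, $\sum_j \alpha_j P_j = \bigl(\sum_j \alpha_j - 1\bigr)\id$. Taking the trace and using $\tr{P_j} = 1$ together with $\tr{\id} = 2$ gives $\sum_j \alpha_j = 2\bigl(\sum_j \alpha_j - 1\bigr)$, hence $\sum_j \alpha_j = 2$. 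Feeding this back shows $\sum_j \alpha_j P_j = \id$, so the choice $t_j = \alpha_j > 0$ produces exactly the required identity.

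The only genuinely nontrivial point is the structural observation in the second paragraph: in dimension two an effect that annihilates a pure state is forced to be a nonnegative multiple of the complementary pure state, which collapses all the freedom in $\Mo(j)$ to a single scalar. Once this is in hand the remainder is bookkeeping, and the one thing I would check carefully is the strict positivity of each $\alpha_j$ -- this is exactly what the nondegeneracy clause in the definition of antidistinguishability secures.
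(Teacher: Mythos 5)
Your proof is correct and follows essentially the same route as the paper: both arguments reduce $\Mo(j)$ to a positive multiple of the rank-one projection $P_j^\perp$ and then extract $\sum_j t_j = 2$ from the normalization by taking a trace. Your explicit appeal to the nondegeneracy clause $\sum_k \tr{P_k\Mo(j)} \neq 0$ to rule out $\alpha_j = 0$ is a small but welcome clarification of a point the paper leaves implicit.
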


\begin{proof}
The 'if' part is Cor. \ref{cor:id}, thus it remains to prove the 'only if' part.
Let us assume that $P_1,\ldots,P_n$ are antidistinguishable, hence there exists a POVM $\Mo$ such that $P_j\Mo(j) = 0$.
This implies that $P_j^\perp \Mo(j) = \Mo(j)$, thereby $\Mo(j) \leq P_j^\perp$. Since $P_j^\perp$ is a one-dimensional projection, we conclude that $\Mo(j)= t_j P_j^\perp$ for some real number $t_j >0$.
Since 
\begin{equation}
\id=\sum_j \Mo(j)=\sum_j t_j P_j^\perp \, , 
\end{equation}
 we first see, by taking the trace on both sides, that $\sum_j t_j =2$.
 Then
\begin{equation}
\id = \sum_j t_j P_j^\perp = 2 \id - \sum_j t_j P_j \, ,
\end{equation}
which means that $\sum_j t_j P_j = \id$.
\end{proof}

A qubit state $\varrho$ can be written in the Bloch form $\varrho = \half (\id + \vr \cdot \vsigma)$, where $\vsigma=(\sigma_x,\sigma_y,\sigma_z)$, $\vr\in\real^3$, $\no{\vr}\leq 1$, and $\varrho$ is pure exactly when $\no{\vr}=1$.
Using Bloch vectors we can reformulate Prop. \ref{prop:qubit} as stated in \cite{CaFuSc02}.
\emph{A set $\{\half (\id + \vr_1 \cdot \vsigma),\ldots,\half (\id + \vr_n \cdot \vsigma)\}$  of pure qubit states is antidistinguishable if and only if there are positive real numbers $t_1,\ldots,t_n$ such that}
\begin{align}\label{eq:sum=0}
\sum_{j=1}^n t_j \vr_j = 0 \, . 
\end{align}
Namely, if these numbers $t_j$ exist, then we can scale them to sum up to $2$, and therefore \eqref{eq:sum=0} is equivalent to our earlier condition.

From Prop. \ref{prop:qubit} we get an improved version of Prop. \ref{prop:2n} for the qubit case. 

\begin{proposition}\label{prop:qubit-2}
Any finite set of pure qubit states is either antidistinguishable or can be made antidistinguishable by adding a single pure state. 
\end{proposition}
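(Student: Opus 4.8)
The plan is to pass to the Bloch representation and reduce everything to a one-line vector identity in $\real^3$. Writing each pure qubit state as $P_j = \half(\id + \vr_j \cdot \vsigma)$ with $\no{\vr_j}=1$, the Bloch form of Prop.~\ref{prop:qubit} says that $P_1,\ldots,P_n$ is antidistinguishable precisely when there are positive reals $t_1,\ldots,t_n$ with $\sum_{j} t_j \vr_j = 0$; that is, when $0$ is a strictly positive combination of the Bloch vectors. The task therefore becomes: assuming no such strictly positive combination exists, produce a single unit vector $\vr$ so that the enlarged family $\vr_1,\ldots,\vr_n,\vr$ does admit one.

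First I would isolate the one place where the hypothesis is used. Set $\mathbf{s} := \sum_{j=1}^n \vr_j$. If $\mathbf{s}=0$, then the coefficients $t_1=\cdots=t_n=1$ already give $\sum_j t_j \vr_j = 0$ with every coefficient positive, so the original set would be antidistinguishable by Prop.~\ref{prop:qubit}. Consequently, as soon as $P_1,\ldots,P_n$ fails to be antidistinguishable we may conclude $\mathbf{s}\neq 0$, and it is exactly this nonvanishing that the construction needs.

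Then I would adjoin the pure state $P_{n+1}$ whose Bloch vector is $\vr := -\mathbf{s}/\no{\mathbf{s}}$, a legitimate unit vector precisely because $\mathbf{s}\neq 0$. One has
\[
\sum_{j=1}^n 1\cdot\vr_j \;+\; \no{\mathbf{s}}\,\vr \;=\; \mathbf{s}-\mathbf{s}\;=\;0,
\]
a vanishing combination whose coefficients $1,\ldots,1,\no{\mathbf{s}}$ are all strictly positive. Rescaling them to sum to $2$ rewrites this as $\sum_j t_j P_j = \id$ in the exact shape demanded by Cor.~\ref{cor:id}, which then yields antidistinguishability of the enlarged set.

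I do not anticipate a genuine obstacle: the argument is the reformulation together with the single identity above. The only point requiring a word of care is the degenerate possibility that $\vr = -\mathbf{s}/\no{\mathbf{s}}$ coincides with one of the original Bloch vectors. This is harmless, since the enlarged family always contains at least two distinct states — if the $\vr_j$ are not all equal this is clear, and if they all equal some $\vw$ then $\mathbf{s}=n\vw$ forces $\vr=-\vw\neq \vw$ — and the presence of two distinct states is all that the nondegeneracy requirement $\tr{P_kP_j}<1$ for some $k$, used in the proof of Cor.~\ref{cor:id}, actually needs.
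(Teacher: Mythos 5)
Your construction is exactly the paper's: pass to Bloch vectors, observe via Prop.~\ref{prop:qubit} that failure of antidistinguishability forces $\mathbf{s}:=\sum_j\vr_j\neq 0$, adjoin the pure state with Bloch vector $-\mathbf{s}/\no{\mathbf{s}}$, and read off the vanishing strictly positive combination; up to rescaling the coefficients this is the paper's Eq.~\eqref{eq:rn1}.

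The one point of divergence is your treatment of the degenerate case $\vr=\vr_j$, and there your argument does not do what is needed. You declare the coincidence ``harmless'' on the grounds that Cor.~\ref{cor:id} still applies to a list containing a repeated entry; but the issue is not whether the corollary applies --- it is that if the new Bloch vector equals an old one, no state has actually been added: the \emph{set} is unchanged and therefore still fails to be antidistinguishable, so the proposition would not be established. (Indeed, antidistinguishability of the length-$(n+1)$ list with a duplicate would imply antidistinguishability of the original set, by merging the two POVM elements assigned to the duplicated state, so this case is contradictory rather than harmless.) What is needed, and what the paper supplies, is that the coincidence cannot occur: if $-\mathbf{s}/\no{\mathbf{s}}=\vr_j$, then merging the two coefficients attached to $\vr_j$ in your displayed identity gives
\[
\bigl(1+\no{\mathbf{s}}\bigr)\,\vr_j+\sum_{i\neq j}\vr_i=0,
\]
a strictly positive vanishing combination of the \emph{original} Bloch vectors, contradicting (via Prop.~\ref{prop:qubit}) the assumption that the original set is not antidistinguishable. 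You essentially carry out this computation yourself in the special case where all the $\vr_i$ coincide; it works verbatim in general, and with that one line your proof coincides with the paper's.
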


\begin{proof}
Let us assume that the set $\{\half (\id + \vr_1 \cdot \vsigma),\ldots,\half (\id + \vr_n \cdot \vsigma)\}$ of pure qubit states is not antidistinguishable, in which case $\vr\equiv\sum_j \vr_j \neq 0$.
We define $\vr_{n+1} = - \vr / \no{\vr}$. 
Then $\vr_{n+1}$ is a unit vector and 
\begin{equation}\label{eq:rn1}
\sum_{j=1}^{n} \frac{1}{\no{\vr}} \vr_j + \vr_{n+1} = 0 \, .
\end{equation}
The remaining step is to see that $\vr_{n+1} \neq \vr_j$.
Let us make a counter assumption that $\vr_{n+1} = \vr_j$ for some $j$. 
However, from \eqref{eq:rn1} it follows that
\begin{equation}
(1+\frac{1}{\no{\vr_j}}) \vr_j + \sum_{i \neq j} \frac{1}{\no{\vr_i}} \vr_i  = 0 \, , 
\end{equation}
which contradicts the fact that the original set is not antidistinguishable.
\end{proof}

Some examples of the completion process described in the proof of Prop. \ref{prop:qubit-2} are shown in Fig. \ref{fig:2}.

\section{Acknowledgement}

This work was performed as part of the Academy of Finland Centre of Excellence program (project 312058).
We thank Laura Piispanen for comments on an earlier version of the paper.

\end{document}